\newcommand{\F}{\mathcal{F}}
\newcommand{\R}{\mathcal{R}}
\newcommand{\Pe}{\mathcal{P}}
\newcommand{\Pa}{\mathcal{P}}
\newcommand{\D}{\mathcal{D}}
\newcommand{\He}{\mathcal{H}}
\newcommand{\I}{\mathcal{I}}
\newcommand{\St}{\mathcal{S}}
\newcommand{\M}{\mathcal{M}}
\newcommand{\Mo}{\mathcal{M}}
\newcommand{\muplus}{\makebox[1em][c]{$\cup$} \hspace*{-1em} \makebox[1em][c]{$\cdot$}}
\newtheorem{theo}{Theorem}
\begin{document}
\title{A Uniform Fixpoint Approach to the Implementation of Inference Methods for Deductive Databases}
\author{Andreas Behrend}
\institute{University of Bonn,\\Institute of Computer Science\,III,R{\"o}merstra{\ss}e 164,\\
 53117 Bonn, Germany\\
\email{behrend@cs.uni-bonn.de}}

\maketitle

%
%
\begin{abstract}
Within the research area of deductive databases three different
database tasks have been deeply investigated: query evaluation,
update propagation and view updating. Over the last thirty years
va\-ri\-ous inference mechanisms have been proposed for realizing
these main functionalities of a rule-based system. However, these
inference mechanisms have been rarely used in commercial DB
systems until now. One important reason for this is the lack of a
uniform approach well-suited for implementation in an SQL-based
system. In this paper, we present such a uniform approach in form
of a new version of the soft consequence operator. Additionally,
we present improved transformation-based approaches to query
optimization and update propagation and view updating which are
all using this operator as underlying evaluation mechanism.
\end{abstract}

%
%

\section{Introduction}

The notion deductive database refers to systems capable of
inferring new knowledge using rules. Within this research area,
three main database tasks have been intensively studied:
(recursive) query evaluation, update propagation and view
updating. Despite of many proposals for efficiently performing
these tasks, however, the corresponding methods have been
implemented in commercial products (such as, e.g., Oracle or DB2)
in a very limited way, so far. One important reason is that many
proposals employ inference methods which are not directly suited
for being transferred into the SQL world. For example, proof-based
methods or instance-oriented model generation techniques (e.g.
based on SLDNF) have been proposed as inference methods for view
updating which are hardly compatible with the set-oriented
bottom-up evaluation strategy of SQL.

In this paper, we present transformation-based methods to query
optimization, update propagation and view updating which are
well-suited for being transferred to SQL. Transformation-based
approaches like Magic Sets~\cite{br86} automatically transform a
given database schema into a new one such that the evaluation of
rules over the rewritten schema performs a certain database task
more efficiently than with respect to the original schema. These
approaches are well-suited for extending database systems, as new
algorithmic ideas are solely incorporated into the transformation
process, leaving the actual database engine with its own
optimization techniques unchanged. In fact, rewriting techniques
allow for implementing various database functionalities on the
basis of one common inference engine. However, the application of
transformation-based approaches with respect to stratifiable
views~\cite{prz88} may lead to unstratifiable recursion within the
rewritten schemata. Consequently, an elaborate and very expensive
inference mechanism is generally required for their evaluation
such as the alternating fixpoint computation or the residual
program approach proposed by van Gelder~\cite{vg93} resp.
Bry~\cite{Bry89}. This is also the case for the kind of recursive
views proposed by the SQL:1999 standard, as they cover the class
of stratifiable views.


As an alternative, the soft consequence operator together with the
soft stratification concept has been proposed by the author
in~\cite{beh03} which allows for the efficient evaluation of Magic
Sets transformed rules. This efficient inference method is
applicable to query-driven as well as update-driven derivations.
Query-driven inference is typically a top-down process whereas
update-driven approaches are usually designed bottom-up. During
the last 6 years, the idea of combining the advantages of top-down
and bottom-up oriented inference has been consequently employed to
enhance existing methods to query optimization~\cite{beh05} as
well as update propagation~\cite{BM04} and to develop a new
approach to view updating. In order to handle alternative
derivations that may occur in view updating methods, an extended
version of the original soft consequence operator has to be
developed. In this paper, this new version is presented, which is
well-suited for efficiently determining the semantics of definite
and indefinite databases but remains compatible with the
set-oriented, bottom-up evaluation of SQL.

%
%

\section{Basic concepts}
\label{Basic concepts}
A Datalog \emph{rule} is a function-free clause of the form $H_1
\leftarrow L_{1} \wedge \dots \wedge L_{m}$ with  $m\geq 1$ where
$H_1$ is an atom denoting the rule's head, and $L_{1},\dots,
L_{m}$ are literals, i.e. positive or negative atoms, representing
its body. We assume all deductive rules to be \emph{safe}, i.e.,
all variables occurring in the head or in any negated literal of a
rule must be also present in a positive literal in its body. If $A
\equiv p(t_{1},\dots,t_{n})$ with $n\geq 0$ is a literal, we use
${\tt  vars}(A)$ to denote the set of variables occurring in A and
${\tt pred}(A)$ to refer to the predicate symbol p of A. If A is
the head of a given rule $R$, we use ${\tt pred}(R)$ to refer to
the predicate symbol of A. For a set of rules $\R$, {\tt
pred}($\R$) is defined as $\cup_{r \in \R}\{ {\tt pred}(r) \}$. A
\emph{fact} is a ground atom in which every $t_i$ is a constant.

A \emph{deductive database} $\D$ is a triple \mbox{$\langle
\F,\R,\I \rangle$} where $\F$ is a finite set of facts (called
\emph{base facts}), $\I$ is a finite set of integrity constraints
(i.e.,positive ground atoms) and $\R$ a finite set of rules such
that ${\tt pred}(\F)\cap {\tt pred}(\R) = \O$ and ${\tt
pred}(\I)\subseteq {\tt pred}(\F \cup \R)$. Within a deductive
database $\D$, a predicate symbol $p$ is called derived (view
predicate), if $p \in {\tt pred}(\R)$. The predicate $p$ is called
extensional (or base predicate), if $p \in {\tt pred}(\F)$. Let
$\He_\D$ be the Herbrand base of $\D=\langle \F,\R,\I \rangle$.
The set of all derivable literals from $\D$ is defined as the
well-founded model~\cite{VGRS91} for $(\F \cup \R)$: $\Mo_\D :=
I^+ \cup \neg \cdot I^-$ where $I^+,I^-\subseteq \He_\D$ are sets
of ground atoms and $\neg \cdot I^-$ includes all negations of
atoms in $I^-$. The set $I^+$ represents the positive portion of
the well-founded model while $\neg \cdot I^-$ comprises all
negative conclusions. The semantics of a database $\D=\langle
\F,\R,\I \rangle$ is defined as the well-founded model $\Mo_\D :=
I^+ \cup \neg \cdot I^-$ for $\F \cup \R$ if all integrity
constraints are satisfied in $\Mo_\D$, i.e., $\I \subseteq I^+$.
Otherwise, the semantics of $\D$ is undefined. For the sake of
simplicity of exposition, and without loss of generality, we
assume that a predicate is either base or derived, but not both,
which can be easily achieved by rewriting a given database.

Disjunctive Datalog extends Datalog by disjunctions of lite\-rals
in facts as well as rule heads. A disjunctive Datalog rule is a
function-free clause of the form $A_1 \vee \ldots \vee A_m
\leftarrow B_{1} \wedge \dots \wedge B_{n}$ with $m,n\geq 1$ where
the rule's head $A_1 \vee \ldots \vee A_m$ is a disjunction of
positive atoms, and the rule's body $B_{1},\dots, B_{n}$ consists
of literals, i.e. positive or negative atoms. A disjunctive fact
$f \equiv f_1 \vee \ldots \vee f_k$ is a disjunction of ground
atoms $f_i$ with $i\geq1$. $f$ is called definite if $i=1$. We
solely consider stratifiable disjunctive rules only, that is,
recursion through negative predicate occurrences is not
permitted~\cite{prz88}. A stratification partitions a given rule
set such that all positive derivations of relations can be
determined before a negative literal with respect to one of those
relations is evaluated. The semantics of a stratifiable
disjunctive databases $\D$ is defined as the perfect model state
$\Pe\Mo_\D$ of $\D$ iff $\D$ is consistent~\cite{beh07,FM92}.

%
%
%
\section[Transformation-Based Approaches]{Transformation-Based Approaches}
\label{Transformation-Based Approaches}
The need for a uniform inference mechanism in deductive databases
is motivated by the fact that transformation-based approaches to
query optimization, update propagation and view updating are still
based on very different model generators. In this section, we
briefly recall the state-of-the-art with respect to these
transformation-based techniques by means of Magic Sets, Magic
Updates and Magic View Updates. The last two approaches have been
already proposed by the author in~\cite{BM04} and~\cite{BM08}.
Note that we solely consider stratifiable rules for the given
(external) schema. The transformed internal schema, however, may
not always be stratifiable such that more general inference
engines are required.

\subsection[Query Optimization]{Query Optimization}
\label{Query Optimization}
Various methods for efficient bottom-up evaluation of queries
against the intensional part of a database have been proposed,
e.g. Magic Sets~\cite{br86}, Counting~\cite{br91}, Alexander
method~\cite{rlk86}). All these approaches are rewriting
techniques for deductive rules with respect to a given query such
that bottom-up materialization is performed in a goal-directed
manner cutting down the number of irrelevant facts generated. In
the following we will focus on Magic Sets as this approach has
been accepted as a kind of standard in the field.

Magic Sets rewriting is a two-step transformation in which the
first phase consists of constructing an adorned rule set, while
the second phase consists of the actual Magic Sets rewriting.
Within an adorned rule set, the predicate symbol of a literal is
associated with an adornment which is a string consisting of
letters {\tt b} and {\tt f}. While {\tt b} represents a bound
argument at the time when the literal is to be evaluated, {\tt f}
denotes a free argument. The adorned version of the deductive
rules is constructed with respect to an adorned query and a
selected sip strategy~\cite{ram91} which basically determines for
each rule the order in which the body literals are to be evaluated
and which bindings are passed on to the next literal. During the
second phase of Magic Sets the adorned rules are rewritten such
that bottom-up materialization of the resulting database simulates
a top-down evaluation of the original query on the original
database. For this purpose, each adorned rule is extended with a
magic literal restricting the evaluation of the rule to the given
binding in the adornment of the rule's head. The magic predicates
themselves are defined by rules which define the set of relevant
selection constants. The initial values corresponding to the query
are given by the so-called magic seed. As an example, consider the
following stratifiable rules $\R$

 \begin{tabbing}
  \hspace*{2mm} ${\tt o(X,Y) \leftarrow} \neg {\tt p(Y,X) \wedge p(X,Y)} $\\
  \hspace*{2mm} ${\tt p(X,Y) \leftarrow e(X,Y)}$\\
  \hspace*{2mm} ${\tt p(X,Y) \leftarrow e(X,Z) \wedge p(Z,Y)}$
 \end{tabbing}
and the query {\tt ?-o(1,2)} asking whether a path from node 1 to
2 exists but not vice versa. Assuming a full left-to-right sip
strategy, Magic Sets yields the following deductive rules
$\R_{ms}$
\begin{tabbing}
  \hspace*{1mm} \=  \hspace*{12mm}      \= \hspace*{58mm}                   \= \hspace*{12mm} \= \kill
  \hspace*{1mm} \> ${\tt o_{bb}(X,Y)}$  \> ${\tt \leftarrow m\_o_{bb}(X,Y) \wedge \neg p_{bb}(Y,X) \wedge p_{bb}(X,Y)}$
                \> ${\tt p_{bb}(X,Y)}$  \> ${\tt \leftarrow m\_p_{bb}(X,Y) \wedge e(X,Y)}$ \\

                \> ${\tt p_{bb}(X,Y)}$  \> ${\tt \leftarrow m\_p_{bb}(X,Y) \wedge e(X,Z) \wedge p_{bb}(Z,Y)}$
                \> ${\tt m\_p_{bb}(Y,X) \leftarrow  m\_o_{bb}(X,Y)}$\\

                \> ${\tt m\_p_{bb}(X,Y) \leftarrow m\_o_{bb}(X,Y) \wedge \neg p_{bb}(Y,X)}$ \>
                \> ${\tt m\_o_{bb}(X,Y) \leftarrow  m\_s\_o_{bb}(X,Y)}$\\
                \> ${\tt m\_p_{bb}(Z,Y) \leftarrow m\_p_{bb}(X,Y) \wedge e(X,Z)}$
\end{tabbing}
as well as the magic seed fact ${\tt m\_s\_o_{bb}(1,2)}$. The
Magic Sets transformation is sound for stratifiable databases.
However, the resulting rule set may be no more stratifiable (as is
the case in the above example) and more general approaches than
iterated fixpoint computation are needed. For determining the
well-founded model of general logic programs, the alternating
fixpoint computation by Van Gelder~\cite{vg93} or the conditional
fixpoint by Bry~\cite{Bry89} could be used. The application of
these methods, however, is not really efficient because the
specific reason for the unstratifiability of the transformed rule
sets is not taken into account. As an efficient alternative, the
soft stratification concept together with the soft consequence
operator~\cite{beh03} could be used for determining the positive
part of the well-founded model (cf. Section~\ref{Consequence
Operators}).

\subsection[Update Propagation]{Update Propagation}
\label{Update Propagation}
Determining the consequences of base relation changes is essential
for maintaining materialized views as well as for efficiently
checking integrity. Update propagation (UP) methods have been
proposed aiming at the efficient computation of implicit changes
of derived relations resulting from explicitly performed updates
of extensional facts~\cite{kuc91,man94,oli91,prz88}. We present a
specific method for update propagation which fits well with the
semantics of deductive databases and is based on the soft
consequence operator again. We will use the notion \emph{update}
to denote the 'true' changes caused by a transaction only; that
is, we solely consider sets of updates where compensation effects
(i.e., given by an insertion and deletion of the same fact or the
insertion of facts which already existed, for example) have
already been taken into account.

The task of update propagation is to systematically compute the
set of all induced modifications starting from the physical
changes of base data. Technically, this is a set of delta facts
for any affected relation which may be stored in corresponding
delta relations. For each predicate symbol $p \in {\tt pred}(\D)$,
we will use a pair of delta relations $\langle \Delta^{+}_{\tt
p},\Delta^{-}_{\tt p} \rangle$ representing the insertions and
deletions induced on $p$ by an update on $\D$. The initial set of
delta facts directly results from the given update and represents
the so-called UP seeds. They form the starting point from which
induced updates, represented by derived delta relations, are
computed. In our transformation-based approach, so-called
\emph{propagation rules} are employed for computing delta
relations. A propagation rule refers to at least one delta
relation in its body in order to provide a focus on the underlying
changes when computing induced updates. For showing the
effectiveness of an induced update, however, references to the
state of a relation before and after the base update has been
performed are necessary. As an example of this propagation
approach, consider again the rules for relation ${\tt p}$ from
Subsection~\ref{Query Optimization}. The UP rules $\R^{\Delta}$
with respect to insertions into ${\tt e}$ are as follows :

\begin{tabbing}
 \hspace*{6.8cm} \= \kill
 \hspace*{5mm} $\Delta^+_{\tt p}{\tt (X,Y) \leftarrow} \Delta^+_{\tt e}{\tt (X,Y) \wedge} \neg {\tt p^{old}(X,Y)}$\\
 \hspace*{5mm} $\Delta^+_{\tt p}{\tt (X,Y) \leftarrow} \Delta^+_{\tt e}{\tt (X,Z) \wedge p^{new}(Z,Y) \wedge} \neg {\tt p^{old}(X,Y)}$\\
 \hspace*{5mm} $\Delta^+_{\tt p}{\tt (X,Y) \leftarrow} \Delta^+_{\tt p}{\tt (Z,Y) \wedge e^{new}(X,Z) \wedge} \neg {\tt p^{old}(X,Y)}$
\end{tabbing}
For each relation $p$ we use $p^{old}$ to refer to its old state
before the changes given in the delta relations have been applied
whereas $p^{new}$ is used to refer to the new state of $p$. These
state relations are never completely computed but are queried with
bindings from the delta relation in the propagation rule body and
thus act as a test of effectiveness. In the following, we assume
the old database state to be present such that the adornment
\emph{old} can be omitted. For simulating the new database state
from a given update so called $transition\ rules$~\cite{oli91} are
used. The transition rules $\R^{\Delta}_{\tau}$ for simulating the
required new states of ${\tt e}$ and ${\tt p}$ are:
\begin{tabbing}
 \hspace*{6.8cm} \= \kill
 \hspace*{5mm} ${\tt e^{new}(X,Y) \leftarrow e(X,Y) \wedge} \neg \Delta^-_{\tt e}{\tt (X,Y)}$  \>
               ${\tt p^{new}(X,Y) \leftarrow e^{new}(X,Y)}$\\
 \hspace*{5mm} ${\tt e^{new}(X,Y) \leftarrow} \Delta^+_{\tt e}{\tt (X,Y)}$                     \>
               ${\tt p^{new}(X,Y) \leftarrow e^{new}(X,Z) \wedge p^{new}(Z,Y)}$
\end{tabbing}
\noindent Note that the new state definition of intensional
predicates only indirectly refers to the given update in contrast
to extensional predicates. If $\R$ is stratifiable, the rule set
$\R \muplus \R^{\Delta} \muplus \R^{\Delta}_{\tau}$ will be
stratifiable, too (cf.~\cite{BM04}). As $\R \muplus \R^{\Delta}
\muplus \R^{\Delta}_{\tau}$ remains to be stratifiable, iterated
fixpoint computation could be employed for determining the
semantics of these rules and the induced updates defined by them.
However, all state relations are completely determined which leads
to a very inefficient propagation process. The reason is that the
supposed evaluation over the two consecutive database states is
performed using deductive rules which are not specialized with
respect to the particular updates that are propagated. This
weakness of propagation rules in view of a bottom-up
materialization will be cured by incorporating Magic Sets.

\paragraph[Magic Updates]{Magic Updates} \label{Magic Updates}
\hspace*{1mm}\\\\
The aim is to develop an UP approach which is automatically
limited to the affected delta relations. The evaluation of side
literals and effectiveness tests is restricted to the updates
currently propagated. We use the Magic Sets approach for
incorporating a top-down evaluation strategy by considering the
currently propagated updates in the dynamic body literals as
abstract queries on the remainder of the respective propagation
rule bodies. Evaluating these propagation queries has the
advantage that the respective state relations will only be
partially materialized. As an example, let us consider the
specific deductive database $\D=\langle \F,\R, \I \rangle$ with
$\R$ consisting of the well-known rules for the transitive closure
${\tt p}$ of relation ${\tt e}$:
 \begin{tabbing}
 \hspace*{2mm}\underline{$\R$:} \hspace*{2mm} \= ${\tt p(X,Y) \leftarrow e(X,Y)}$\\
                                              \> ${\tt p(X,Y) \leftarrow e(X,Z),p(Z,Y)}$ \\
 \hspace*{10mm}\\
 \hspace*{2mm}\underline{$\F$:} \> {\tt edge(1,2), edge(1,4), edge(3,4) }\\
                                  \> {\tt edge(10,11), edge(11,12), \dots, edge(98,99), edge(99,100)}
 \end{tabbing}
Note that the derived relation ${\tt p}$ consists of $4098$
tuples. Suppose a given update contains the new tuple $e(2,3)$ to
be inserted into $\D$ and we are interested in finding the
resulting consequences for ${\tt p}$. Computing the induced update
by evaluating the stratifiable propagation and transition rules
would lead to the generation of $94$ new state facts for relation
${\tt e}$, $4098$ old state facts for ${\tt p}$ and $4098+3$ new
state facts for ${\tt p}$. The entire number of generated facts is
$8296$ for computing the three induced insertions $\Delta^{+}_{\tt
p}{\tt (1,3)},\Delta^{+}_{\tt p}{\tt (2,3)},\Delta^{+}_{\tt p}{\tt
(2,4)}\}$ with respect to ${\tt p}$.

However, the application of the Magic Updates rewriting with
respect to the propagation queries $\{\Delta^{+}_{\tt p}{\tt
(Z,Y)},\Delta^{+}_{\tt e}{\tt (X,Y)},\Delta^{+}_{\tt e}{\tt
(X,Z)}\}$ provides a much better focus on the changes to ${\tt
e}$. Within its application, the following subquery rules

 \begin{tabbing}
  \hspace*{6.5cm} \= \kill
   \hspace*{5mm} ${\tt m\_p^{new}_{bf}(Z) \leftarrow} \Delta^{+}_{\tt e}{\tt (X,Z)}$ \>
                 ${\tt m\_p_{bb}(X,Y) \leftarrow} \Delta^{+}_{\tt e}{\tt (X,Y)}$ \\
   \hspace*{5mm} ${\tt m\_e^{new}_{fb}(Z) \leftarrow} \Delta^{+}_{\tt p}{\tt (Z,Y)}$ \>
                 ${\tt m\_p_{bb}(X,Y) \leftarrow} \Delta^{+}_{\tt e}{\tt (X,Z) \wedge p^{new}_{bf}(Z,Y)} $ \\
              \> ${\tt m\_p_{bb}(X,Y) \leftarrow} \Delta^{+}_{\tt p}{\tt (Z,Y) \wedge e^{new}_{fb}(X,Z)} $
 \end{tabbing}

\noindent are generated. The respective queries $Q =
\{m\_e^{new}_{fb}, m\_p^{new}_{bf}, \ldots\}$ allow to specialize
the employed transition rules, e.g.

 \begin{tabbing}
 \hspace*{7.5cm} \= \kill
 \hspace*{5mm} ${\tt e^{new}_{fb}(X,Y) \leftarrow m\_e^{new}_{fb}(Y) \wedge e(X,Y) \wedge} \neg \Delta^-_{\tt e}{\tt (X,Y)}$ \\
 \hspace*{5mm} ${\tt e^{new}_{fb}(X,Y) \leftarrow m\_e^{new}_{fb}(Y) \wedge } \Delta^+_{\tt e}{\tt (X,Y)}$
 \end{tabbing}
such that only relevant state tuples are generated. We denote the
Magic Updates transformed rules $\R \muplus \R^{\Delta} \muplus
\R^{\Delta}_{\tau}$ by $\R^{\Delta}_{mu}$. Despite of the large
number of rules in $\R^{\Delta}_{mu}$, the number of derived
results remains relatively small. Quite similar to the Magic sets
approach, the Magic Updates rewriting may result in an
unstratifiable rule set. This is also the case for our example
where the following negative cycle occurs in the respective
dependency graph:
\begin{center}
 $ \Delta^+_{\tt p} \stackrel{pos}{\longrightarrow} {\tt m\_p_{bb}} \stackrel{pos}{\longrightarrow} {\tt p_{bb}} \stackrel{neg}{\longrightarrow} \Delta^+_{\tt p}$
\end{center}
In~\cite{BM04} it has been shown, however, that the resulting
rules must be at least softly stratifiable such that the soft
consequence operator could be used for efficiently computing their
well-founded model. Computing the induced update by evaluating the
Magic Updates transformed rules leads to the generation of two new
state facts for e, one old state fact and one new state fact for
p. The entire number of generated facts is $19$ in contrast to
$8296$ for computing the three induced insertions with respect to
p.

\subsection[View Updates]{View Updates} \label{View Updates}
Bearing in mind the numerous benefits of the afore mentioned
methods to query optimization and update propagation, it seemed
worthwhile to develop a similar, i.e., incremental and
trans\-for\-mation-based, approach to the dual problem of view
updating. In contrast to update propagation, view updating aims at
determining one or more base relation updates such that all given
update requests with respect to derived relations are satisfied
after the base updates have been successfully applied. In the
following, we recall a transformation-based approach to
incrementally compute such base updates for stratifiable databases
proposed by the author in~\cite{BM08}. The approach extends and
integrates standard techniques for efficient query answering,
integrity checking and update propagation. The analysis of view
updating requests usually leads to alternative view update
realizations which are represented in disjunctive form.

\paragraph{Magic View Updates} \label{Magic View Updates}
\hspace*{1mm}\\\\
In our transformation-based approach, true view updates (VU) are
considered only, i.e., ground atoms which are presently not
derivable for atoms to be inserted, or are
derivable for atoms to be deleted, respectively. A method for view
updating determines sets of alternative updates (called VU
realization) satisfying a given request. There may be infinitely
many realizations and even realizations of infinite size which
satisfy a given VU request. In our approach, a breadth-first
search is em\-ployed for determining a set of minimal
realizations. A realization is minimal in the sense that none of
its updates can be removed without losing the property of being a
realization. As each level of the search tree is completely
explored, the result usually consists of more than one
realization. If only VU realizations of infinite size exist, our
method will not terminate.

Given a VU request, view updating methods usually determine
subsequent VU requests in order to find relevant base updates.
Similar to delta relations for UP we will use the notion \emph{VU
relation} to access individual view updates with respect to the
relations of our system. For each relation $p \in {\tt pred}(\R
\cup \F)$ we use the VU relation $\nabla^{+}_{\tt p}(\vec{x})$ for
tuples to be inserted into $\D$ and $\nabla^{-}_{\tt p}(\vec{x})$
for tuples to be deleted from $\D$. The initial set of delta facts
resulting from a given VU request is again represented by
so-called VU seeds. Starting from the seeds, so-called VU rules
are employed for finding subsequent VU requests systematically.
These rules perform a top-down analysis in a similar way as the
bottom-up analysis implemented by the UP rules. As an example,
consider the following database $\D=\langle \F,\R,\I \rangle$ with
$\F=\{r_2(2),s(2)\}$, $\I=\{ic(2)\}$ and the rules $\R$:
\begin{tabbing}
 \hspace*{60mm} \= Musterline \kill
 \hspace*{5mm} ${\tt p(X) \leftarrow q_1(X)}$              \> ${\tt q_1(X) \leftarrow r_1(X) \wedge s(X)}$\\
 \hspace*{5mm} ${\tt p(X) \leftarrow q_2(X)}$              \> ${\tt q_2(X) \leftarrow r_2(X) \wedge} \neg {\tt s(X)}$\\
 \hspace*{5mm} ${\tt ic(2) \leftarrow} \neg {\tt  au(2)}$ \> ${\tt au(X) \leftarrow q_2(X) \wedge} \neg {\tt q_1(X)}$
\end{tabbing}
The corresponding set of VU rules $\R^{\nabla}$ with respect to
$\nabla^{+}_{\tt p}(2)$ is given by:
\begin{tabbing}
 \hspace*{60mm} \= Musterline \kill
 \hspace*{5mm} $\nabla^+_{\tt q_1}{\tt (X) \vee} \nabla^+_{\tt q_1}{\tt (X)} {\tt \leftarrow} \nabla^+_{\tt p}{\tt (X)}$\\
 \hspace*{5mm} $\nabla^+_{\tt r_1}{\tt (X) \leftarrow} \nabla^+_{\tt q_1}{\tt (X) \wedge} \neg {\tt r_1(X)}$
     \> $\nabla^+_{\tt r_2}{\tt (X) \leftarrow} \nabla^+_{\tt q_2}{\tt (X) \wedge} \neg {\tt r_2(X)}$\\
 \hspace*{5mm} $\nabla^+_{\tt s}{\tt (X) \leftarrow} \nabla^+_{\tt q_1}{\tt (X) \wedge} \neg {\tt s(X)}$
     \> $\nabla^-_{\tt s}{\tt (X) \leftarrow} \nabla^+_{\tt q_2}{\tt (X) \wedge s(X)}$
\end{tabbing}
In contrast to the UP rules from Section~\ref{Update Propagation},
no explicit references to the new database state are included in
the above VU rules. The reason is that these rules are applied
iteratively over several intermediate database states before the
minimal set of realizations has been found. Hence, the apparent
re\-fe\-rences to the old state really refer to the current state
which is continuously modified while computing VU realizations.
These predicates solely act as tests again queried with respect to
bindings from VU relations and thus will never be completely
evaluated.

Evaluating these rules using model generation with disjunctive
facts leads to two alternative updates, insertion $\{r_1(2)\}$ and
deletion $\{s(2)\}$, induced by the derived disjunction
$\nabla^+_{\tt r_1}{\tt (2)} \vee \nabla^-_{\tt s}{\tt (2)}$.
Obviously, the second update represented by $\nabla^-_{\tt s}{\tt
(2)}$ would lead to an undesired side effect by means of an
integrity violation. In order to provide a complete method,
however, such erroneous/incomplete paths must be also explored and
side effects repaired if possible. Determining whether a computed
update will lead to a consistent database state or not can be done
by applying a bottom-up UP process at the end of the top-down
phase leading to an irreparable constraint violation with respect
to $\nabla^-_s(2)$:
\begin{tabbing}
 \hspace*{10mm}$\nabla^-_{\tt s}{\tt (2)} \Rightarrow \Delta^+_{\tt q_2}{\tt (2)} \Rightarrow \Delta^+_{\tt p}{\tt (2)},
                \Delta^+_{\tt au}{\tt (2)} \Rightarrow \Delta^-_{\tt ic}{\tt (2)} \rightsquigarrow false$
\end{tabbing}
In order to see whether the violated constraint can be repaired,
the subsequent view update request $\nabla^+_{\tt ic}{\tt (2)}$
with respect to $\D$ ought to be answered. The application of
$\R^{\nabla}$ yields
\begin{tabbing}
 \hspace*{38mm} $ \Rightarrow \nabla^-_{\tt q_2}{\tt (2)}, \nabla^+_{\tt q_2}(2) \rightsquigarrow false$\\
 \hspace*{10mm}$\nabla^+_{\tt ic}{\tt (2)} \Rightarrow \nabla^-_{\tt aux}{\tt (2)} \Updownarrow$\\
 \hspace*{38mm} $ \Rightarrow \nabla^+_{\tt q_1}{\tt (2)} \Rightarrow \nabla^+_{\tt s}{\tt (2)}, \nabla^-_{\tt s}{\tt (2)} \rightsquigarrow false$
\end{tabbing}
showing that this request cannot be satisfied as inconsistent
subsequent view update requests are generated on this path. Such
erroneous derivation paths will be indicated by the keyword
$false$. The reduced set of updates - each of them leading to a
consistent database state only - represents the set of
realizations $\Delta^+_{\tt r_1}{\tt (2)}$.

An induced deletion of an integrity constraint predicate can be
seen as a side effect of an 'erroneous' VU. Similar side effects,
however, can be also found when induced changes to the database
caused by a VU request may include derived facts which had been
actually used for deriving this view update. This effect is shown
in the following example for a deductive database $\D=\langle
\R,\F,\I \rangle$ with $\R=\{{\tt h(X)}{\tt \leftarrow p(X) \wedge
q(X) \wedge i}, {\tt i}{\tt \leftarrow p(X) \wedge} \neg {\tt
q(X)}\}$, $\F=\{{\tt p(1)}\}$, and $\I=\O$. Given the VU request
$\nabla^+_{\tt h}{\tt (1)} $, the overall evaluation scheme for
determining the only realization $\{ \Delta^+_{\tt q}(1),
\Delta^+_{\tt p}(c^{new_1}) \}$ would be as follows:
\begin{tabbing}
 \hspace*{68mm} $ \Rightarrow \nabla^+_{\tt p}{\tt (c^{new_1})}$\\
 \hspace*{5mm}$\nabla^+_{\tt h}{\tt (1)} \Rightarrow \nabla^+_{\tt q}{\tt (1)} \Rightarrow \Delta^+_{\tt q}{\tt (1)} \Rightarrow \Delta^-_{\tt i} \Rightarrow \nabla^+_{\tt i}                  \Updownarrow$\\
 \hspace*{68mm} $ \Rightarrow \nabla^-_{\tt q}{\tt (1)}, \nabla^+_{\tt q}{\tt (1)} \rightsquigarrow false$
\end{tabbing}
The example shows the necessity of compensating side effects,
i.e., the compensation of the 'deletion' $\Delta^-_i$ (that
prevents the 'insertion' $\Delta^+_h(1)$) caused by the tuple
$\nabla^+_q(1)$. In general the compensation of side effects,
however, may in turn cause additional side effects which have to
be 'repaired'. Thus, the view updating method must alternate
between top-down and bottom-up phases until all possibilities for
compensating side effects (including integrity constraint
violations) have been considered, or a solution has been found. To
this end, so-called \emph{VU transition rules}
$\R^{\nabla}_{\tau}$ are used for restarting the VU analysis. For
example, the compensation of violated integrity constraints can be
realized by using the following kind of transition rule
$\Delta^-_{ic}(\vec{c}) \rightarrow \nabla^{+}_{ic}(\vec{c})$ for
each ground literal  $ic(\vec{c}) \in \I$. VU transition rules
make sure that erroneous solutions are evaluated to $false$ and
side effects are repaired.

Having the rules for the direct and indirect consequences of a
given VU request, a general application scheme for systematically
determining VU realizations can be defined (see\cite{BM08} for
details). Instead of using simple propagation rules $\R \muplus
\R^{\Delta} \muplus \R^{\Delta}_{\tau}$, however, it is much more
efficient to employ the corresponding Magic Update rules. The
top-down analysis rules $\R \muplus \R^{\nabla}$ and the bottom-up
consequence analysis rules $\R^{\Delta}_{mu} \muplus
\R^{\nabla}_{\tau}$ are alternating applied. Note that the
disjunctive rules $\R \muplus \R^{\nabla}$ are stratifiable while
$\R^{\Delta}_{mu} \muplus \R^{\nabla}_{\tau}$ is softly
stratifiable such that a perfect model state~\cite{beh07,FM92} and
a well-founded model generation must alternately be applied. The
iteration stops as soon as a realization for the given VU request
has been found. The correctness of this approach has been already
shown in~\cite{BM08}.

\section[Consequence Operators]{Consequence Operators and Fixpoint Computations}
\label{Consequence Operators}
In the following, we summarize the most important fixpoint-based
approaches for definite as well as indefinite rules. All these
methods employ so-called consequence operators which formalize the
application of deductive rules for deriving new data. Based on
their properties, a new uniform consequence operator is developed
subsequently.

\subsection[Definite Rules]{Definite Rules}
\label{Definite Rules}
First, we recall the iterated fixpoint method for constructing the
well-founded model of a stratifiable database which coincides with
its perfect model~\cite{prz88}.

\begin{definition} Let $\D=\langle \F,\R\rangle$ be a deductive database,
$\lambda$ a stratification on $\D$, $\R_1 \muplus \ldots \muplus
\R_n$ the partition of $\R$ induced by $\lambda$, $I \subseteq
\He_{\D}$ a set of ground atoms, and $\mathbb{[[} \R
\mathbb{]]}_{I}$ the set of all ground instances of rules in $\R$
with respect to the set $I$. Then we define
\begin{enumerate}
 \item the immediate consequence operator $T_{\R}(I)$ as
       \begin{tabbing}
       \hspace*{28mm} \= \kill
       \hspace*{2mm} $T_\R(I) := \{ H\ |$
        \> $H \in I \vee \exists r \in \mathbb{[[} \R \mathbb{]]}_{I}:$ $r \equiv H \leftarrow L_1 \wedge \ldots \wedge L_n$\\
        \> such that $L_i \in I $ for all positive literals $L_i$\\
        \> and $L \notin I$ for all negative literals $L_j \equiv \neg L\}$,
       \end{tabbing}

 \item the iterated fixpoint $M_n$ as the last Herbrand model of
       the sequence
       \begin{tabbing}
       \hspace*{2mm} $M_1$ := {\tt lfp} $(T_{\R_1}, \F )$, $M_2$ := {\tt lfp} $(T_{\R_2}, M_1 )$, \ldots, $M_n$ := {\tt lfp} $(T_{\R_n}, M_{n-1} )$,
       \end{tabbing}
       where {\tt lfp} $(T_\R,\F)$ denotes the least fixpoint of operator $T_\R$
       containing $\F$.

 \newpage

 \item and the iterated fixpoint model $\Mo^i_{\D}$ as
       \begin{tabbing}
       \hspace*{2mm} $\Mo^i_{\D} :=$ $M_n \muplus \neg \cdot \overline{M_n}$.
       \end{tabbing}
\end{enumerate}
\end{definition}
This constructive definition of the iterated fixpoint model is
based on the immediate consequence operator introduced by van
Emden and Kowalski. In~\cite{prz88} it has been shown that the
perfect model of a stratifiable database $\D$ is identical with
the iterated fixpoint model $\Mo^i_{\D}$ of $\D$.

Stratifiable rules represent the most important class of deductive
rules as they cover the expressiveness of recursion in SQL:1999.
Our transformation-based approaches, however, may internally lead
to unstratifiable rules for which a more general inference method
is necessary. In case that unstratifiability is caused by the
application of Magic Sets, the so-called soft stratification
approach proposed by the author in~\cite{beh03} could be used.

\begin{definition} Let $\D=\langle \F,\R\rangle$ be a deductive database,
$\lambda^s$ a soft stratification on $\D$, $\Pa= P_1 \muplus
\ldots \muplus P_n$ the partition of $\R$ induced by $\lambda^s$,
and $I \subseteq \He_{\D}$ a set of ground atoms. Then we define
\begin{enumerate}
 \item the soft consequence operator $T^s_{\Pa}(I)$ as
       \begin{tabbing}
       \hspace*{28mm} \= \kill
       \hspace*{2mm} $T^s_{\Pa}(I) :=\left \{
                \begin{array}{l@{\quad}l}
                  I                & if\ T_{P_j}(I) = I\ forall\ j\in\{1, \ldots,n\}\\
                  T_{P_i}(I)       & with\ i\ =\ {\tt min} \{ j\ |\ T_{P_j}(I) \supsetneq I \},\ otherwise.
                \end{array} \right.\)
       \end{tabbing}
       where $T_{P_i}$ denotes the immediate consequence operator.

 \item and the soft fixpoint model $\Mo^s_{\D}$ as
       \begin{tabbing}
       \hspace*{2mm} $\Mo^s_{\D} :=$ ${\tt lfp}\ (T^s_\Pa,\F) \muplus \neg \cdot \overline{{\tt ( lfp}\ (T^s_\Pa,\F){\rm )}}$.
       \end{tabbing}
\end{enumerate}
\end{definition}
Note that the soft consequence operator is based upon the
immediate consequence operator and can even be used to determine
the iterated fixpoint model of a stratifiable
database~\cite{BM04}. As an even more general alternative, the
alternating fixpoint model for arbitrary unstratifiable rules has
been proposed in~\cite{KSS95} on the basis of the eventual
consequence operator.

\begin{definition} Let $\D=\langle \F,\R \rangle$ be a deductive database,
$I^+,I^- \subseteq \He_{\D}$ sets of ground atoms, and
$\mathbb{[[} \R \mathbb{]]}_{I^+}$ the set of all ground instances
of rules in $\R$ with respect to the set $I^+$. Then we define
\begin{enumerate}
 \item the eventual consequence operator $\widehat{T}_{\R}\langle I^-\rangle$ as
       \begin{tabbing}
       \hspace*{35mm} \= \kill
       \hspace*{2mm} $\widehat{T}_\R \langle I^-\rangle (I^+) := \{ H\ |$
        \> $H \in I^+ \vee \exists r \in \mathbb{[[} \R \mathbb{]]}_{I^+}:$ $r \equiv H \leftarrow L_1 \wedge \ldots \wedge L_n$\\
        \> such that $L_i \in I^+ $ for all positive literals $L_i$\\
        \> and $L \notin I^-$ for all negative literals $L_j \equiv \neg L\}$,
       \end{tabbing}

 \item the eventual consequence transformation $\widehat{S}_{\D}$ as
       \begin{tabbing}
       \hspace*{2mm} $\widehat{S}_{\D}(I^-) := \ {\tt lfp} (\widehat{T}_{\R} \langle I^-\rangle, \F)$,
       \end{tabbing}

 \item and the alternating fixpoint model $\Mo^a_{\D}$ as
       \begin{tabbing}
       \hspace*{2mm} $\Mo^a_{\D} :=$ ${\tt lfp}\ (\widehat{S}^2_{\D},\O) \muplus \neg \cdot \overline{\widehat{S}^2_{\D}{\tt ( lfp}\ (\widehat{S}^2_{\D},\O){\rm )}}$ ,
       \end{tabbing}
       where $\widehat{S}^2_{\D}$ denotes the nested application
       of the eventual consequence transformation, i.e.,
       $\widehat{S}^2_{\D}(I^-) = \widehat{S}_{\D}(\widehat{S}_{\D}(I^-))$.
\end{enumerate}
\end{definition}
In~\cite{KSS95} it has been shown that the alternating fixpoint
model $\Mo^a_\D$ coincides with the well-founded model of a given
database $\D$. The induced fixpoint computation may indeed serve
as a universal model generator for arbitrary classes of deductive
rules. However, the eventual consequence operator is
computationally expensive due to the intermediate determination of
supersets of sets of true atoms. With respect to the discussed
transformation-based approaches, the iterated fixpoint model could
be used for determining the semantics of the stratifiable subset
of rules in $\R_{ms}$ for query optimization, $\R^{\Delta}_{mu}$
for update propagation, and $\R^{\Delta}_{mu} \muplus
\R^{\nabla}_{\tau}$ for view updating. If these rule sets contain
unstratifiable rules, the soft or alternating fixpoint model
generator ought be used while the first has proven to be more
efficient than the latter~\cite{beh03}. None of the above
mentioned consequence operators, however, can deal with indefinite
rules necessary for evaluating the view updating rules $\R \muplus
\R^{\nabla}$.

\subsection{Indefinite Rules}
\label{Indefinite Rules}
In~\cite{beh07}, the author proposed a consequence operator for
the efficient bottom-up state generation of stratifiable
disjunctive deductive databases. To this end, a new version of the
immediate consequence operator based on hyperresolution has been
introduced which extends Minker's operator for positive
disjunctive Datalog rules~\cite{Mink82}. In contrast to already
existing model generation methods our approach for efficiently
computing perfect models is based on state generation. Within this
disjunctive consequence operator, the mapping ${\tt red}$ on
indefinite facts is employed which returns non-redundant and
subsumption-free representations of disjunctive facts.
Additionally, the mapping ${\tt min\_models}(F)$ is used for
determining the set of minimal Herbrand models from a given set of
disjunctive facts $F$. We identify a disjunctive fact with a set
of atoms such that the occurrence of a ground atom A within a fact
$f$ can also be written as $A \in f$. The set difference operator
can then be used to remove certain atoms from a disjunction while
the empty set as result is interpreted as $false$.

\begin{definition} Let $\D=\langle \F,\R \rangle$ be a stratifiable disjunctive
database rules,$\lambda$ a stratification on $\D$, $\R_1 \muplus
\ldots \muplus \R_n$ the partition of $\R$ induced by $\lambda$,
$I$ an arbitrary subset of indefinite facts from the disjunctive
Herbrand base~\cite{FM92} of $\D$, and $\mathbb{[[} \R
\mathbb{]]}_{I}$ the set of all ground instances of rules in $\R$
with respect to the set $I$ Then we define.
\begin{enumerate}
 \item the disjunctive consequence operator $T^{state}_\R$ as
       \begin{tabbing}
       \hspace*{25mm} \= \kill
       \hspace*{-1mm} $T^{state}_\R(I):= {\tt red}(\{ H \ |\ H \in I \vee \exists r \in \mathbb{[[} \R \mathbb{]]}_{I}:$ $r \equiv A_1 \vee \ldots \vee A_l \leftarrow L_1 \wedge \ldots \wedge L_n$\\
                                      \> with $H = (A_1 \vee \dots \vee A_l \vee f_1 \setminus L_1 \vee \dots \vee f_n \setminus L_n \vee C)$\\
                                      \> such that $f_i \in I \wedge L_i \in f_i$ for all positive literals $L_i$\\
                                      \> and $L_j \notin I$ for all negative literals $L_j \equiv \neg L$ \\
                                      \> and $(L_j \in C \Leftrightarrow \exists \M \in {\tt min\_models}(I):$\\
                                                              \> \hspace*{1.5cm} $L_j \in \M$ for at least one negative literal $L_j$ \\
                                                              \> \hspace*{1.5cm} and $L_{k}\in \M$ for all positive literals $L_k$ \\
                                                              \> \hspace*{1.5cm} and $A_l \notin \M$ for all head literals of r$)\})$
       \end{tabbing}
 \item the iterated fixpoint state $S_n$ as the last minimal model state of
       the sequence
       \begin{tabbing}
       \hspace*{-1mm} $S_1$ := {\tt lfp} $(T^{state}_{\R_1}, \F )$, $S_2$ := {\tt lfp} $(T^{state}_{\R_2}, S_1 )$, \ldots, $S_n$ := {\tt lfp} $(T^{state}_{\R_n}, S_{n-1} )$,
       \end{tabbing}

 \item and the iterated fixpoint state model $\M\St_{\D}$ as
       \begin{tabbing}
       \hspace*{-1mm} $\M\St_{\D} :=$ $S_n \muplus \neg \cdot \overline{S_n}$.
       \end{tabbing}
 \end{enumerate}
\end{definition}
In~\cite{beh07} it has been shown that the iterated fixpoint state
model $\M\St_{\D}$ of a disjunctive database $\D$ coincides with
the perfect model state of $\D$. It induces a constructive method
for determining the semantics of stratifiable disjunctive
databases. The only remaining question is how integrity
constraints are handled in the context of disjunctive databases.
We consider again definite facts as integrity constraints, only,
which must be derivable in every model of the disjunctive
database. Thus, only those models from the iterated fixpoint state
are selected in which the respective definite facts are derivable.
To this end, the already introduced keyword \emph{false} can be
used for indicating and removing inconsistent model states. The
database is called consistent iff at least one consistent model
state exists.

This proposed inference method is well-suited for determining the
semantics of stratifiable disjunctive databases with integrity
constraints. And thus, it seems to be suited as the basic
inference mechanism for evaluating view updating rules. The
problem is, however, that the respective rules contain
unstratifiable definite rules which cannot be evaluated using the
inference method proposed above. Hence, the evaluation techniques
for definite (Section~\ref{Definite Rules}) and indefinite rules
(Section~\ref{Indefinite Rules}) do not really fit together and a
new uniform approach is needed.

\section[A Uniform Fixpoint Approach]{A Uniform Fixpoint Approach}
\label{A Uniform Fixpoint Approach}
In this section, a new version of the soft consequence operator is
proposed which is suited as efficient state generator for softly
stratifiable definite as well as stratifiable indefinite
databases. The original version of the soft consequence operator
$T^{s}_{\Pa}$ is based on the immediate consequence operator by
van Emden and Kowalski and can be applied to an arbitrary
partition $\Pa$ of a given set of definite rules. Consequently,
its application does not always lead to correct derivations. In
fact, this operator has been designed for the application to
softly stratified rules resulting from the application of Magic
Sets. However, this operator is also suited for determining the
perfect model of a stratifiable database.

\begin{lemma}\label{Lemma1}
Let $\D = \langle \F,\R \rangle$ be a stratifiable database and
$\lambda$ a stratification of $\R$ inducing the partition $\Pa$ of
$\R$. The perfect model $\Mo_{\D}$ of $\langle \F,\R \rangle$ is
identical with the soft fixpoint model of $\D$, i.e.,
\begin{tabbing}
  \hspace*{5mm} $\Mo_{\D} = {\tt lfp}(T^s_{\Pa},\F) \muplus \neg \cdot \overline{{\tt lfp}(T^s_{\Pa},\F)}$.
\end{tabbing}
\end{lemma}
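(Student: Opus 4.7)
The plan is to show that iterating the soft consequence operator $T^s_\Pa$ starting from $\F$ effectively simulates the iterated fixpoint computation $M_1, M_2, \ldots, M_n$ stratum by stratum, so that ${\tt lfp}(T^s_\Pa,\F) = M_n$; the lemma then follows directly from Przymusinski's result that $\Mo_\D = M_n \muplus \neg \cdot \overline{M_n}$.

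First I would record the key monotonicity property of strata. In a stratification $\lambda$, a rule $r \in P_j$ has head predicate of stratum $j$, body predicates of stratum $\leq j$ in positive literals, and stratum $< j$ in negative literals. Consequently, for any $I$ and any $j$, the value $T_{P_j}(I)$ depends only on the restriction of $I$ to atoms of stratum $\leq j$, and $T_{P_j}(I) \setminus I$ contains only atoms of stratum exactly $j$. This is the technical fact that makes the argument go.

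Next, I would argue by induction on the iteration $I_0 = \F$, $I_{k+1} = T^s_\Pa(I_k)$ that the soft iteration proceeds through strata in increasing order and matches the iterated fixpoint at each stage transition. Concretely: let $i_k$ be the minimum index with $T_{P_{i_k}}(I_k) \supsetneq I_k$ (the index chosen by $T^s_\Pa$). By minimality, $T_{P_j}(I_k) = I_k$ for all $j < i_k$. Applying $T_{P_{i_k}}$ adds only stratum-$i_k$ atoms, which by the stratification property do not appear in the bodies of rules in any $P_j$ with $j < i_k$; hence $T_{P_j}(I_{k+1}) = I_{k+1}$ still holds for $j < i_k$, so $i_{k+1} \geq i_k$. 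The sequence of chosen indices is therefore nondecreasing, and the iteration spends a consecutive block of steps at each stratum. During the block where $i_k = i$, the iteration coincides with iterating $T_{P_i}$ starting from an input that already contains $M_{i-1}$ (by induction), so the block ends precisely when the lower fixpoint ${\tt lfp}(T_{P_i}, M_{i-1}) = M_i$ is reached.

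Stringing these blocks together gives ${\tt lfp}(T^s_\Pa,\F) = M_n$, and taking the closed-world negation of the complement on both sides yields the claimed equality $\Mo_\D = {\tt lfp}(T^s_\Pa,\F) \muplus \neg \cdot \overline{{\tt lfp}(T^s_\Pa,\F)}$. The main technical obstacle I expect is the bookkeeping in the induction step: one has to verify carefully that a \emph{single} application of $T_{P_i}$ by $T^s_\Pa$ never interleaves with reactivation of a lower stratum, and that no stratum-$i$ rule becomes applicable again after higher-stratum atoms appear. Both follow from the observation above that $T_{P_j}$ is insensitive to atoms of strata $>j$, so once a stratum is closed it stays closed; writing this invariant precisely, and using it both to justify the monotonicity $i_k \leq i_{k+1}$ and the equality of the block-end state with $M_i$, is the heart of the proof.
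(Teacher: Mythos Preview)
Your proposal is correct and follows essentially the same line as the paper's proof: both rest on the observation that, because the partition is induced by a stratification (so head predicates of different strata are disjoint and body literals respect the stratum ordering), once a stratum $P_i$ is saturated it can never fire again, and hence the ${\tt min}$ selection in $T^s_\Pa$ forces the soft iteration to reproduce the iterated fixpoint sequence $M_1,\ldots,M_n$. The paper's proof is a two-sentence sketch of exactly this argument; you have simply spelled out the invariant and the induction in more detail.
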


\begin{proof}
This property follows from the fact that for every partition $\Pa=
P_1 \muplus \ldots P_n$ induced by a stratification, the condition
${\tt pred}(P_i) \cap {\tt pred}(P_j) = \O$ with $i \neq j$ must
necessarily hold. As soon as the application of the immediate
consequence operator $T_{P_i}$ with respect to a certain layer
$P_i$ generates no new facts anymore, the rules in $P_i$ can never
fire again. The application of the incorporated ${\tt min}$
function then induces the same sequence of Herbrand models as in
the case of the iterated fixpoint computation.\hfill $\Box$
\end{proof}
Another property we need for extending the original soft
consequence operator is about the application of $T^{state}$ to
definite rules and facts.

\begin{lemma}\label{Lemma2}
Let $r$ be an arbitrary definite rule and $f$ be a set of
arbitrary definite facts. The single application of $r$ to $f$
using the immediate consequence operator or the disjunctive
consequence operator, always yields the same result, i.e.,
\begin{tabbing}
  \hspace*{5mm} $T_{r}(f)\ =\ T^{state}_{r}(f)$.
\end{tabbing}
\end{lemma}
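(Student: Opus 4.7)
The plan is to unpack the definition of $T^{state}_r$ under the assumption that both the rule $r$ and the fact set $f$ are definite, and to show that every component of the generated disjunction collapses to what $T_r(f)$ would produce.

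First I would exploit the definiteness of $r$: since the head is a single atom $A_1$ (i.e.\ $l=1$ in the schematic disjunction $A_1 \vee \ldots \vee A_l$ in the definition of $T^{state}$), the head contribution to any derived disjunction $H$ is just $A_1$. Next, I would use the definiteness of $f$: every $f_i \in f$ is a singleton disjunction. Whenever the side condition $L_i \in f_i$ is met for a positive body literal, we have $f_i = \{L_i\}$, hence $f_i \setminus L_i = \emptyset$. So the residual-disjunction terms $f_1 \setminus L_1 \vee \ldots \vee f_n \setminus L_n$ all vanish, leaving $H = A_1 \vee C$.

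The key step is then to show $C = \emptyset$. Here I would argue that a definite fact set $f$ possesses a unique minimal Herbrand model, namely the set of its atoms, so $\mathtt{min\_models}(f) = \{\M\}$ with $\M = \{a : a \in f\}$. The condition for $L_j \in C$ requires a minimal model containing the atom $L$ of some negated body literal $\neg L$. But the side condition for $r$ to fire already imposes $L \notin f$, hence $L \notin \M$, so no such minimal model exists and $C$ is forced to be empty. Consequently, every $H$ generated by $T^{state}_r(f)$ is exactly the ground head atom $A_1$ produced by $T_r(f)$ under the same positive and negative body conditions; conversely, every atom produced by $T_r(f)$ arises from a firing instance admissible in $T^{state}_r(f)$.

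Finally I would observe that the outer $\mathtt{red}$ operator has no effect, since the resulting set contains only ground atoms (equivalently singleton disjunctions), and thus no subsumption or redundancy can occur. Putting these observations together yields the set equality $T_r(f) = T^{state}_r(f)$. The main obstacle I anticipate is the careful bookkeeping for the side-condition defining $C$: one has to verify that the quantification over $\mathtt{min\_models}$ genuinely collapses under definiteness, and that the mild notational ambiguity between a negative literal $L_j \equiv \neg L$ and membership in a model is resolved by interpreting membership as referring to the underlying atom $L$.
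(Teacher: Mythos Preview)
Your argument is correct and more detailed than the paper's own proof, which dispatches the lemma in a single sentence by asserting that all non-minimal conclusions of $T^{state}$ are eliminated by the subsumption operator $\mathtt{red}$. You instead show that under definiteness of both $r$ and $f$ no non-definite conclusion is ever generated in the first place: the head is a single atom, every residual $f_i \setminus L_i$ collapses to $\emptyset$, and the $C$-component is empty because the unique minimal model of a definite fact set cannot contain the atom of any negated body literal that already passed the firing test $L \notin f$. Hence $\mathtt{red}$ is idle rather than doing the work. This is a sharper statement than the paper's, and it sidesteps the question of whether every non-minimal disjunction output by $T^{state}$ is actually accompanied by a strictly smaller one that subsumes it---a point the paper's one-liner leaves implicit. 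Your route buys a self-contained verification; the paper's buys brevity.
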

\begin{proof}
The proof follows from the fact that all non-minimal conclusions
of $T^{state}$ are immediately eliminated by the subsumption
operator ${\tt red}$. \hfill $\Box$
\end{proof}
The above proposition establishes the relationship between the
definite and indefinite case showing that the disjunctive
consequence operator represents a generalization of the immediate
one. Thus, its application to definite rules and facts can be used
to realize the same derivation process as the one performed by
using the immediate consequence operator. Based on the two
properties from above, we can now consistently extend the
definition of the soft consequence operator which allows its
application to indefinite rules and facts, too.

\begin{definition}
Let $\D=\langle \F,\R \rangle$ be an arbitrary disjunctive
database, $I$ an arbitrary subset of indefinite facts from the
disjunctive Herbrand base of $\D$, and $\Pa=P_1 \muplus \ldots
\muplus P_n$  a partition of $\R$. The general soft consequence
operator $T^g_{\Pa}(I)$ is defined as
       \begin{tabbing}
       \hspace*{28mm} \= \kill
       \hspace*{2mm} $T^g_{\Pa}(I) :=\left \{
                \begin{array}{l@{\quad}l}
                  I                & if\ T_{P_j}(I) = I\ forall\ j\in\{1, \ldots,n\}\\
                  T^{state}_{P_i}(I)       & with\ i\ =\ {\tt min} \{ j\ |\ T^{state}_{P_j}(I) \supsetneq I \},\ otherwise.
                \end{array} \right.\)
       \end{tabbing}
       where $T^{state}_{P_i}$ denotes the disjunctive consequence operator.

\end{definition}
In contrast to the original definition, the general soft
consequence operator is based on the disjunctive operator
$T^{state}_{P_i}$ instead of the immediate consequence operator.
The least fixpoint of $T^{g}_\Pa$ can be used to determine the
perfect model of definite as well as indefinite stratifiable
databases and the well-founded model of softly stratifiable
definite databases.

\begin{theo}
Let $\D = \langle \F,\R \rangle$ be a stratifiable disjunctive
database and $\lambda$ a stratification of $\R$ inducing the
partition $\Pa$ of $\R$. The perfect model state $\Pe\St_{\D}$ of
$\langle \F,\R \rangle$ is identical with the least fixpoint model
of $T^g_{\Pa}$, i.e.,
\begin{tabbing}
  \hspace*{5mm} $\Pe\St_{\D} = {\tt lfp}(T^g_{\Pa},\F) \muplus \neg \cdot \overline{{\tt lfp}(T^g_{\Pa},\F)}$.
\end{tabbing}
\end{theo}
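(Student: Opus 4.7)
The plan is to reduce the statement to the already-established equivalence between the iterated fixpoint state model $\M\St_\D$ and the perfect model state $\Pe\St_\D$ (cited after Definition 5 from [beh07]). In other words, I want to show that computing ${\tt lfp}(T^g_\Pa, \F)$ produces exactly the same sequence of strata-wise least fixpoints as the iterated construction $S_1, \ldots, S_n$ built from $T^{state}_{\R_1}, \ldots, T^{state}_{\R_n}$. Once this is done, the stratified negation on top (${\tt lfp}(T^g_\Pa,\F) \muplus \neg \cdot \overline{{\tt lfp}(T^g_\Pa,\F)}$) matches the construction of $\M\St_\D$, and the theorem follows.

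First I would invoke the defining property of a stratification: the partition $\Pa = P_1 \muplus \ldots \muplus P_n$ satisfies ${\tt pred}(P_i) \cap {\tt pred}(P_j) = \emptyset$ for $i \neq j$, and each negative body literal in $P_i$ refers only to predicates defined in some $P_k$ with $k < i$. From this I would argue, by induction on the stratum index $i$, that during the fixpoint iteration of $T^g_\Pa$ the ${\tt min}$ selector first exhausts $P_1$, then $P_2$, and so on. The key point is that once $T^{state}_{P_i}$ has stabilized, no later application of $T^{state}_{P_j}$ (for $j > i$) can produce a head atom whose predicate symbol belongs to $P_i$, so $T^{state}_{P_i}$ cannot be reactivated; hence the ${\tt min}$ function never re-selects a lower index. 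This gives the stratum-by-stratum emulation we want.

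Next I would verify that within a single stratum, the sequence generated by iterating $T^g_\Pa$ with index $i$ frozen coincides with the ordinary least-fixpoint computation of $T^{state}_{P_i}$ on top of the current set of indefinite facts. This is immediate because on stratum $i$ the operator $T^g_\Pa$ reduces literally to $T^{state}_{P_i}$ while the latter is still productive. Combining this with the inductive step above yields
\[
{\tt lfp}(T^g_\Pa, \F) \;=\; S_n,
\]
where $S_n$ is the iterated fixpoint state from Definition 5. Applying the known identification $\M\St_\D = \Pe\St_\D$ for stratifiable disjunctive databases (from [beh07]) then closes the argument.

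The main obstacle I expect is the careful bookkeeping of the ${\tt min}$ selection across strata, especially handling the negative body literals correctly: I must argue that when the fixpoint process is working on $P_i$, all predicates that appear negatively in $P_i$'s rules are already completely determined in lower strata, so the disjunctive operator $T^{state}_{P_i}$ behaves on the current $I$ exactly as in the isolated computation $S_i := {\tt lfp}(T^{state}_{P_i}, S_{i-1})$. Lemma 2 plays an auxiliary role here by ensuring that in the purely definite special case no spurious disjunctive conclusions are introduced, so the general operator truly conservatively extends the original soft operator used in Lemma 1.
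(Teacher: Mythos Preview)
Your proposal is correct and follows essentially the same route as the paper's proof: the paper simply invokes the structural argument of Lemma~\ref{Lemma1} (disjoint head predicates across strata, so the ${\tt min}$ selector marches through $P_1,\ldots,P_n$ without ever returning) together with the correctness of the per-stratum fixpoint computations established in~\cite{beh07}. Your write-up is a more explicit unfolding of that two-line sketch; the only embellishment is your appeal to Lemma~\ref{Lemma2}, which the paper does not cite here and which is not strictly needed for the theorem as stated.
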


\begin{proof}
The proof directly follows from the correctness of the fixpoint
computations for each stratum as shown in~\cite{beh07} and the
same structural argument already used in Lemma~\ref{Lemma1}.\hfill
$\Box$
\end{proof}
The definition of ${\tt lfp}(T^g_{\Pa},\F)$ induces a constructive
method for determining the perfect model state as well as the
well-founded model of a given database. Thus, it forms a suitable
basis for the evaluation of the rules $\R_{ms}$ for query
optimization, $\R^{\Delta}_{mu}$ for update propagation, and
$\R^{\Delta}_{mu} \muplus \R^{\nabla}_{\tau}$ as well as $\R
\muplus \R^{\nabla}$ for view updating. This general approach to
defining the semantics of different classes of deductive rules is
surprisingly simple and induces a rather efficient inference
mechanism in contrast to general well-founded model generators.
The soft stratification concept, however, is not yet applicable to
indefinite databases because ordinary Magic Sets can not be used
for indefinite clauses. Nevertheless, the resulting extended
version of the soft consequence operator can be used as a uniform
basis for the evaluation of all transformation-based techniques
mentioned in this paper.

%
%
%
\section[Conclusion]{Conclusion}
\label{Conclusion}
In this paper, we have presented an extended version of the soft
consequence operator for the efficient top-down and bottom-up
reasoning in deductive databases. This operator allows for the
efficient evaluation of softly stratifiable incremental
expressions and stratifiable disjunctive rules. It solely
represents a theoretical approach but provides insights into
design decisions for extending the inference component of
commercial database systems. The relevance and quality of the
transformation-based approaches, however, has been already shown
in various practical research projects (e.g.~\cite{BDMS08,BSM10})
at the University of Bonn.

%
%

\bibliographystyle{diss}

\end{document}